  \providecommand\BibTeX{{%
    \normalfont B\kern-0.5em{\scshape i\kern-0.25em b}\kern-0.8em\TeX}}}
\lstdefinelanguage{register}{
    morekeywords={inc, dec, goto, halt, accept, reject},
    morecomment=[l]{//}, 
    morecomment=[s]{/*}{*/}, 
    morestring=[b]" 
} %
\definecolor{codegreen}{rgb}{0,0.6,0}
\definecolor{codegray}{rgb}{0.5,0.5,0.5}
\definecolor{codepurple}{rgb}{0.58,0,0.82}
\definecolor{backcolour}{rgb}{0.95,0.95,0.92}
\ttfamily\color{blue!90!black},
\ttfamily\color{red!80!black},
\renewcommand{\vec}[1]{\mathbf{#1}}
\newcommand{\R}{\mathbb{R}}
\newcommand{\N}{\mathbb{N}}
\newcommand{\calC}{\mathcal{C}}
\newcommand{\ignore}[1]{}
\newcommand{\vc}{\vec{c}}
\newcommand{\vd}{\vec{d}}
\newcommand{\vr}{\vec{r}}
\newcommand{\vs}{\vec{s}}
\newcommand{\vo}{\vec{o}}
\newcommand{\vM}{\vec{M}}
\newcommand{\vp}{\vec{p}}
\newcommand{\vi}{\vec{i}}
\newcommand{\vx}{\vec{x}}
\newcommand{\vy}{\vec{y}}
\newcommand{\vz}{\vec{z}}
\newcommand{\vh}{\vec{h}}
\renewcommand{\vu}{\vec{u}}
\renewcommand{\vb}{\vec{b}}
\newcommand{\bfr}{\mathbf{r}}
\newcommand{\bfp}{\mathbf{p}}
\newcommand{\Rp}{\mathbb{R}_{\geq 0}}
\def\longrightharpoonup{\relbar\joinrel\rightharpoonup}
\def\longleftharpoondown{\leftharpoondown\joinrel\relbar}
\def\longrightleftharpoons{\mathop{\vcenter{\hbox{\ooalign{\raise1pt\hbox{$\longrightharpoonup\joinrel$}\crcr\lower1pt\hbox{$\longleftharpoondown\joinrel$}}}}}}
\def\rxn{\mathop{\rightarrow}\limits}  
\newcommand\rxni[1]{\xrightarrow[]{\substack{#1\\ \bot}}}
\newcommand{\slto}{\to^1}      
\newcommand{\segto}{\rightsquigarrow}      
\crefname{claim}{claim}{claims}
\newtheorem{thm}{Theorem}
\newtheorem{defn}[thm]{Definition}{\bfseries}{\itshape}
{\bfseries}{\itshape}
\newtheorem{lem}[thm]{Lemma}{\bfseries}{\itshape}
{\bfseries}{\itshape}
\newtheorem{obs}[thm]{Observation}{\bfseries}{\itshape}
\numberwithin{equation}{section}
\numberwithin{thm}{section}
\crefname{thm}{Theorem}{Theorems}
\crefname{lem}{Lemma}{Lemmas}
\crefname{obs}{Observation}{Observations}
\crefname{defn}{Definition}{Definitions}
\crefname{defn}{Definition}{Definitions}
\begin{document}

\title{Rate-independent continuous inhibitory chemical reaction networks are Turing-universal}
\titlerunning{Rate-independent continuous inhibitory CRNs are Turing-universal}
\author{Kim Calabrese\inst{1} \and
David Doty\inst{1}}
\institute{University of California, Davis, USA\\
\email{\{ebcalabrese,doty\}@ucdavis.edu}}



\maketitle


\begin{abstract}
We study the model of continuous chemical reaction networks (CRNs), consisting of reactions such as $A+B \rxn C+D$ that can transform some continuous, nonnegative real-valued quantity (called a \emph{concentration}) of chemical species $A$ and $B$ into equal concentrations of $C$ and $D$.
Such a reaction can occur from any state in which both reactants $A$ and $B$ are present, i.e., have positive concentration.
We modify the model to allow \emph{inhibitors}, for instance, reaction $A+B \rxni{I} C+D$ can occur only if the reactants $A$ and $B$ are present and the inhibitor $I$ is absent.

The computational power of non-inhibitory CRNs has been studied.
For instance, the reaction $X_1+X_2 \rxn Y$ can be thought to compute the function $f(x_1,x_2) = \min(x_1,x_2)$.
Under an ``adversarial'' model in which reaction rates can vary arbitrarily over time, it was found that exactly the continuous, piecewise linear functions can be computed,
ruling out even simple functions such as $f(x) = x^2$.
In contrast, in this paper we show that inhibitory CRNs can compute any computable function $f:\N\to\N$.
\end{abstract}

\keywords{Chemical Reaction Networks \and Mass-Action \and Analog Computation \and Turing Universal}

\section{Introduction}
\label{sec:intro}

The model of continuous chemical reaction networks (CRNs) consists of reactions such as $A+B \rxn C+D$ that can transform some continuous, nonnegative real-valued quantity (called a \emph{concentration}) of chemical species $A$ and $B$ (the \emph{reactants}) into equal concentrations of $C$ and $D$ (the \emph{products}).
This model has long held an important role in modeling naturally occurring chemical systems and predicting their evolution over time.
Recently, the model has been investigated, not as a modeling language, but as a \emph{programming} language for describing desired behavior of engineered chemicals.
For example, the reaction $X_1 + X_2 \rxn Y$ can be thought to compute the function $f(x_1,x_2) = \min(x_1,x_2)$,
in the sense that if we start in configuration $\{x_1 X_1, x_2 X_2\}$, i.e., concentration $x_1$ of species $X_1$ and concentration $x_2$ of species $X_2$, as long as the reaction keeps happening, it will eventually produce concentration $\min(x_1,x_2)$ of species $Y$. 

The computational power depends greatly on how reaction rates are defined. 
The most common rate model is \emph{mass-action},
which says that the rate of a reaction like $A+B \rxn^{k} C+D$, with positive \emph{rate constant} $k > 0$, proceeds at rate $k \cdot [A] \cdot [B]$, where $[S]$ represents the concentration of species $S$.
The rates of all reactions affecting a species $S$ determines its derivative $\frac{d[S]}{dt}$ (adding rates of reactions where $S$ is a product, and subtracting rates where it is a reactant),
so the concentrations evolve according to a system of polynomial ODEs.
It was recently shown that mass-action CRNs are capable of Turing universal computation~\cite{fages2017strong},
a very complex construction resulting from a long and deep line of research that culminated in showing the surprising computational power of polynomial ODEs~\cite{bournez2017odes}.

What if reaction rates are not so predictable over time?
One could imagine a solution does not remain well-mixed, so that some reactions go faster in a certain part of the volume where some species are more concentrated.
It is also the case that it is difficult experimentally to engineer precise rate constants~\cite{srinivas2017enzyme}.
To address these issues, 
Chen, Doty, Reeves, and Soloveichik~\cite{chen2023rate} defined a model of \emph{adversarial} reaction rates and asked what functions can be computed when the rates can vary arbitrarily over time.
They found that this model, called \emph{stable computation}, is much more computationally limited than with mass-action rates:
exactly the continuous, piecewise linear functions $f:\R^d \to \R$ can be stably computed.\footnote{
    Technically this is using the so-called \emph{dual-rail} encoding, which represents a single real value $x$ as the difference of \emph{two} species concentrations $[X^+] - [X^-]$.
    If one encodes inputs and output directly as nonnegative concentrations,
    then some discontinuities can occur, but only when some input $x_i$ goes from 0 to positive.
}
An open question from~\cite{chen2023rate} concerns a natural modification of the CRN model,
inspired by similar models of gene regulatory networks,
in which the presence of a species can \emph{inhibit} a reaction from occurring.
For example,
the reaction $A+B \rxni{I} C+D$ can occur only if its reactants are present ($[A],[B] > 0$) and its inhibitor is absent ($[I]=0$).
We call such a network an \emph{inhibitory chemical reaction network} (iCRN).\footnote{Note that our notation $A+B \rxni{I} C+D$ puts inhibitors above the reaction arrow where a rate constant would normally be written, but since we consider rate-independent computation, we will have no rate constants. We also note that in gene regulatory networks, typically a species (called \emph{transcription factor} in that literature) inhibits another \emph{species}, which is assumed to be produced at some otherwise constant rate by a single reaction, whereas our model is more general in allowing inhibitors of arbitrary reactions (so $I$ could inhibit production of $C$ via one reaction $A \rxni{I} C$ but not via another reaction $B \rxn C$.)}

The negative results of~\cite{chen2023rate}, 
showing computation is limited to continuous piecewise linear functions, 
heavily use the fact that the reachability relation $\segto$ (defined in \Cref{sec:prelim}) on CRN configurations is \emph{additive}:
if $\vx \segto \vy$ for configurations $\vx,\vy$ (nonnegative vectors representing concentrations of each species), then for all nonnegative $\vc$, we have $\vx+\vc \segto \vy+\vc$;
in other words the presence of extra molecules (represented by $\vc$) cannot \emph{prevent} reactions from occurring.
However, with inhibitors reachability is no longer additive (if $\vc$ contains inhibitors that are absent in $\vx$),
so it is natural to wonder if inhibitors increase the computational power of the model.

It is well-known ``folklore'' that in the \emph{discrete} model of iCRNs,
where the amount of a species is modeled as a nonnegative integer \emph{count},
in which reactions discretely increment or decrement species counts,
then inhibitors give the model Turing-universal power.
It is worth seeing why this is true, to understand the novel contribution of this paper (and why it is not trivially solved in the continuous model by the discrete iCRN we describe next).
It is well-known that register machines---finite-state machines equipped with a fixed number of nonnegative integer \emph{registers}, each of which can be incremented, decremented, or tested for 0---are Turing universal~\cite{minsky1967computation}.
An example register machine is:
\begin{lstlisting}
dec r1,5
inc r2
inc r2
goto 1
halt
\end{lstlisting}
Line (a.k.a., \emph{state}) 1 has the interpretation:
decrement register $r_1$ and then go to line 2, unless $r_1$ is 0, in which case go to line 5.
Increment instructions always increment the specified register and go to the next line.
The \lstinline{goto 1} statement on line 4 is syntactic sugar for \lstinline{dec r3,1} for some register $r_3$ that is always 0.
The above register machine, interpreted as taking an input $x$ in register $r_1$ and halting with an output value in register $r_2$, computes the function $f(x) = 2x$.

For a register machine consisting of such increment and decrement instructions, 
the following is a straightforward transformation of the instruction for line/state $i$ to iCRN reactions:
\begin{center}
\begin{tabular}{|l|l|}
  \hline
  \verb!inc r_j! & $L_i \rxn L_{i+1} + R_j$
  \\ \hline
  \verb!dec r_j,k! & $L_i + R_j \rxn L_{i+1}$
  \\
                   & $L_i \rxni{R_j} L_k$
  \\ \hline
\end{tabular}
\end{center}

It is clear that at any time exactly one reaction is applicable, and it simulates the next instruction of the register machine.
In particular, when on a decrement instruction, the power of inhibition is used to ensure that if $R_j$ has positive count, then only the first of the two decrement reactions is applicable
(and as in the non-inhibitory CRN model, when $R_j$ is absent, only the second decrement reaction is applicable).
Note that a \lstinline{halt} instruction on line $i$ is not explicitly implemented as any reaction;
the simple lack of any reaction with $L_i$ as a reactant means that the CRN will terminate when the register machine does.

Our main construction in \Cref{sec:results} follows this basic strategy of simulating register machines, using inhibition to detect when a register is 0.
However, our main novel contribution is a way to ``discretize'' the behavior of the continuous CRN, so that the discrete steps of the register machine can be simulated faithfully.
This is primarily done by introducing a \emph{stable oscillator}, shown in \Cref{sec:stable-oscillation}.
\section{Preliminaries}
\label{sec:prelim}
These definitions largely follow those of~\cite{chen2023rate},
the only exception being the definition of \emph{applicable reaction}, which is modified to account for inhibitors.

For any set $A$, let $\mathcal{P}(A)$ denote the power set of $A$ (set of all subsets of $A$).
Let $\N$ denote the nonnegative integers and $\R$ denote the real numbers.
Given a finite set $F$ and a set $S$,
let $S^F$ denote the set of functions $\vc: F \to S$.
In the case of $S = \R$ (resp. $\N$), we view $\vc$ equivalently as a real-valued (resp. integer-valued) vector indexed by elements of $F$.
Given $x \in F$, we write $\vc(x)$, to denote the real number indexed by $x$.
The notation $\Rp^F$ is defined similarly for nonnegative real vectors.
Throughout this paper, let $\Lambda$ be a finite set of chemical \emph{species}.
Given $S\in \Lambda$ and $\vc \in \Rp^\Lambda$, we refer to $\vc(S)$ as the \emph{concentration of $S$ in $\vc$}.
When the configuration $\vc$ is understood from context, we write $[S]$ to denote $\vc(S)$.
For any $\vc\in \Rp^\Lambda$, let $[\vc] = \{S \in \Lambda \ |\ \vc(S) > 0 \}$, the set of species \emph{present} in $\vc$
(a.k.a., the \emph{support} of $\vc$).
We write $\vc \leq \vc'$ to denote that $\vc(S) \leq \vc'(S)$ for all $S \in \Lambda$.
Given $\vc,\vc' \in \Rp^\Lambda$, we define the vector component-wise operations of addition $\vc+\vc'$, subtraction $\vc-\vc'$, and scalar multiplication $x \vc$ for $x \in \R$.

A \emph{reaction} over $\Lambda$ is a triple $\alpha = (\bfr,\Delta,\bfp) \in \N^\Lambda \times \mathcal{P}(\Lambda) \times \N^\Lambda$, 
such that $\vr \neq \vp$,
specifying the stoichiometry of the reactants, products, as well as the inhibitors of the reaction respectively.\footnote{It is customary to define, for each reaction, a \emph{rate constant} $k \in \R_{>0}$ specifying a constant multiplier on the mass-action rate (i.e., the product of the reactant concentrations), but as we are studying CRNs whose output is independent of the reaction rates, we leave the rate constants out of the definition.} 
We say a reaction $\alpha$ is \emph{inhibited} by species $I$ if $I \in \Delta$. For instance, given $\Lambda=\{A,B,C,I\}$, the reaction $A+2B \rxni{I} A+3C$ is the triple $({(1,2,0)},{\{I\}},{(1,0,3)}).$ 
 

An \emph{inhibitory chemical reaction network (iCRN)} is a pair $\calC=(\Lambda,R)$, where $\Lambda$ is a finite set of chemical \emph{species},
and $R$ is a finite set of reactions over $\Lambda$.
A \emph{configuration} of a iCRN $\calC=(\Lambda,R)$ is a vector $\vc \in \Rp^\Lambda$.
Given a configuration $\vc$ and reaction $\alpha = (\bfr,\Delta,\bfp)$, we say that $\alpha$ is \emph{applicable} in $\vc$ if $[\bfr] \subseteq [\vc]$ (i.e., $\vc$ contains positive concentration of all of the reactants) and $[\vc] \cap \Delta = \emptyset$ (no inhibitor is present in $\vc$).
If no reaction is applicable in configuration $\vc$, we say $\vc$ is \emph{static}.

Fix an iCRN $\calC = (\Lambda,R)$. We define the 
$|\Lambda| \times |R|$ \emph{stoichiometry matrix} $\vM$ such that,
for species $S \in \Lambda$ and reaction $\alpha = (\vr,\Delta,\vp) \in R$,
$\vM(S,\alpha) = \vp(S) - \vr(S)$
is the net amount of 
$S$ produced by $\alpha$ (negative if $S$ is consumed).\footnote{$\vM$ does not fully specify $\calC$, since catalysts and inhibitors are not modeled: reactions $A + B \rxni{C} A + D$ and $B \rxn D$ both correspond to the column vector $(0,-1,0,1)^\top$.}
For example, if we have the reactions $X \to Y$ and $X + A \to 2X + 3Y$, and if the three rows correspond to $A$, $X$, and $Y$, in that order, then
  $$
    \vM =
    \left(
      \begin{array}{cc}
         0 & -1 \\
        -1 &  1 \\
         1 &  3 \\
      \end{array}
    \right)
  $$

\begin{defn}\label{defn-reachable-line}
Configuration $\vd$ is \emph{straight-line reachable (aka $1$-segment reachable)} from configuration $\vc$, written $\vc \slto \vd$, if $(\exists \vu \in \Rp^R)\ \vc + \vM \vu = \vd$
and $\vu(\alpha) > 0$ only if reaction $\alpha$ is applicable at $\vc$.
In this case write $\vc \slto_\vu \vd$.
\end{defn}

\noindent Intuitively, by a single segment we mean running the reactions applicable at $\vc$ at a constant (possibly 0) rate to get from $\vc$ to $\vd$.
In the definition, $\vu(\alpha)$ represents the flux of reaction $\alpha \in R$.

\begin{defn}
\label{defn:reachable-lines}
Let $k \in \N$.
Configuration $\vd$ is \emph{$k$-segment reachable} from configuration $\vc$, written $\vc \segto^k \vd$, 
if $(\exists \vb_0, \dots, \vb_{k})\ \vc  = \vb_0 \slto \vb_1 \slto \vb_2 \slto \dots  \slto \vb_{k}$,
with $\vb_k = \vd$.
\end{defn}

\begin{defn}
\label{defn:reachable-segment}
Configuration $\vd$ is \emph{segment-reachable} 
(or simply \emph{reachable}) 
from configuration $\vc$, written $\vc \segto \vd$, if $(\exists k\in\N)\ \vc \segto^k \vd$.
\end{defn}

\noindent Often Definition \ref{defn:reachable-segment} is used implicitly, when we make statements such as, ``Run reaction 1 until $X$ is gone, then run reaction 2 until $Y$ is gone'', which implicitly defines two straight lines in concentration space.
Although we make no attempt to ascribe an ``execution time'' to any path followed by segments in Definition~\ref{defn:reachable-segment}, it is sometimes useful to refer to such paths over time.
In this case we suppose that each segment takes one unit of time,
so that if $\vx \segto^k \vy$,
we associate this to a \emph{trajectory} $\rho:[0,k] \to \Rp^\Lambda$, where $\rho(t)$ represents the concentrations of species after $t$ units of time have elapsed,
i.e., following the first $\lfloor t \rfloor$ segments, then a fraction of the $t$'th segment if $t \not \in \N$ (so that for integer $t$,
$\rho(t)$ is the configuration $\vb_t$ in Definition~\ref{defn:reachable-lines}).
In this case we write $\vx \segto_\rho \vy$.

Given configurations $\vx,\vy,\vz$ such that $\vx \segto_{\rho_1} \vy$ and $\vy \segto_{\rho_2} \vz$,
we denote the \emph{concatenation} of trajectories $\rho_1$ and $\rho_2$  
to be the trajectory $\rho_1 : \rho_2$ such that $\vx \segto_{\rho_1:\rho_2} \vz$. 

We now formalize what it means for an iCRN to ``rate-independently'' compute a function $f$.
Since our main result is about simulating register machines that process natural numbers,
we define stable computation for functions $f: \N \to \N$.\footnote{
    Since iCRNs operate on real-valued concentrations,
    a very similar definition for functions $f: \Rp \to \Rp$ makes sense (and was formally defined for non-inhibitory CRNs in~\cite{chen2023rate});
    \Cref{sec:conclusion} discusses this issue further.
    We could also extend the definition to take multiple inputs for a function $f:\N^d \to \N$, but since register machines are Turing universal,
    we could encode multiple input integers via a pairing function into a single integer,
    so it is no loss of generality to consider single-input functions.
}
An \emph{inhibitory chemical reaction computer (iCRC)} is a tuple $\mathcal{C} = (\Lambda, R, \vs, X, Y)$,
where $(\Lambda,R)$ is an iCRN,
$\vs \in \N^\Lambda$ is the \emph{initial context}
(species other than the input that are initially present with some constant concentration; in our case, $\vs(A_1) = 1$ for a single species $A_1 \in \Lambda$ and $0$ for all other species),
$X \in \Lambda$ is the \emph{input species},
and $Y \in \Lambda$ is the \emph{output species}.
We say a configuration $\vo \in \Rp^\Lambda$ is \emph{stable} if, for all $\vo'$ such that $\vo \segto \vo'$, $\vo(Y) = \vo'(Y)$,
i.e., the concentration of $Y$ cannot change once $\vo$ has been reached.
Let $f: \N \to \N$.
We say $\mathcal{C}$ \emph{stably computes} $f$ if,
for all $n \in \N$,
starting from initial configuration $\vi = \vs + \{n X\}$
(i.e., starting with initial context, plus the desired input amount of $X$),
for all configurations $\vc$ such that $\vi \segto \vc$, 
there is $\vo$ such that $\vc \segto \vo$,
such that $\vo$ is stable and $\vo(Y) = f(n)$.

\section{Main results}
\label{sec:results}

Our goal is to design an iCRN that simulates the behavior of a register machine, 
similar to simulations by discrete CRNs~\cite{SolCooWinBru08,angluin2006fast}.
The inclusion of inhibitors to our model allows us to enforce deterministic state transitions in chemical reaction networks, but to emulate the sequential power of discrete computation, we need a mechanism to manage control flow. 
First, we describe a simpler ``stably oscillating'' iCRN that is, in a sense, the main conceptual contribution of this paper.

\subsection{Stable oscillation}
\label{sec:stable-oscillation}


The following definition captures the behavior of a system of chemical reactions that execute sequentially, and eventually repeat their execution.
A similar definition for the discrete model of population protocols appears in \cite{cooper2017constructing}.\footnote{
    However, Definition~\ref{defn:stable-oscillation} is distinct from the that of~\cite{cooper2017constructing},
    both by being defined in a continuous-state rather than a discrete-state model,
    and in that we do not require ``self-stabilizing'' behavior (which dictates that the behavior should occur from any possible initial state).
}
In particular, we have species $A_1,\dots,A_k$ that all start at 0.
$A_1$ monotonically goes up to 1, then monotonically down to 0, 
then $A_2$ goes up and down similarly,
etc.
After $A_k$ does this,
the whole thing repeats.

\begin{defn}
\label{defn:period-of-oscillation}
Let $\mathcal{A} = \{A_1,A_2,\ldots,A_k\}$ be a set of species in an iCRN, and let $\rho$ be a trajectory. We say $\rho([t_1,t_2])$ is a \emph{wave of $A_i$} if for some $t_1 < t < t_2$

\begin{itemize}
    \item $\rho(t_1)(A_i) = \rho(t_2)(A_i) = 0$,
    \item $\rho(t)(A_i) = 1$,
    \item $\rho([t_1,t])(A_i)$ is nondecreasing, and
    \item $\rho([t,t_2])(A_i)$ is nonincreasing.
\end{itemize}
$\rho([T_1,T_2])$ is a \emph{period of oscillation} of $\mathcal{A}$ if there exists $T_1 = t_1, t_2, \ldots, t_k = T_2$ such that for all $0 \le i < k$,

\begin{itemize}
    \item $\rho([t_i,t_{i+1}])$ is a wave of $A_i$, and
    \item for all $j \ne i$ and all $t_i \leq t \leq t_{i+1}$, $\rho(t)(A_j) = 0$.
\end{itemize}
\end{defn}

\begin{defn}
\label{defn:stable-oscillation}
\noindent We say an iCRN $\mathcal{C}$ \emph{stably oscillates} on $\mathcal{A}$ from configuration $\Vec{i}$ if for all $\Vec{c}$ such that $\Vec{i} \rightsquigarrow_{\rho_1} \Vec{c}$, we have $\Vec{c} \rightsquigarrow_{\rho_2} \Vec{i}$ such that letting $\rho = \rho_1:\rho_2$, $\rho([0,t])$ is one or more periods of oscillation of $\mathcal{A}$.
\end{defn}

The next lemma demonstrates an iCRN that stably oscillates.
We note that Lemma~\ref{lem:oscillator} is not used directly in the rest of the paper.
Instead, the proof of Lemma~\ref{lem:oscillator} is intended to serve as a ``warmup'' to illustrate some of the key ideas used in the more complex iCRN defined in \Cref{sec:reg-machine-simulation-construction}.

\begin{lem}\label{lem:oscillator}
Let $n \geq 3$ and $\mathcal{C}$ be the iCRN with species $\Lambda = \{X_0,X_1,\ldots,X_{n-1}\}$ and for each $0 \leq i < n$, reaction 
$
    X_i \rxni{X_{i-1}} X_{i+1},
$
where $i - 1$ and $i + 1$ are both taken modulo $n$. If $\vi = \{1X_0\}$ is the starting configuration, then $\mathcal{C}$ stably oscillates on $\mathcal{O} = \{X_i \mid 0 \le i \le n \mathrm{~, } ~ i ~\mathrm{ is ~ odd} \}$. 
\end{lem}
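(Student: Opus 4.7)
The plan is to exploit the fact that from $\vi = \{1 X_0\}$ the inhibitor structure forces exactly one reaction to be applicable at any reachable configuration, making the dynamics essentially deterministic: the system is forced to cycle through $X_0 \to X_1 \to \cdots \to X_{n-1} \to X_0$, and the wave structure on odd-indexed species then follows by direct inspection of what each $X_j$ does during the two transitions surrounding its peak.

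First I would prove by induction on the number of segments the invariant that every configuration $\vc$ reachable from $\vi$ satisfies $\sum_j \vc(X_j) = 1$ and has support equal to either $\{X_i\}$ or $\{X_i, X_{(i+1) \bmod n}\}$ for some $i$. The inductive step rests on the observation that at each such state exactly one reaction is applicable: at $\{X_i\}$ only $X_i \rxni{X_{i-1}} X_{i+1}$ (its inhibitor $X_{i-1}$ is absent, and $n \geq 3$ ensures $X_{i-1}$ is distinct from the species in the support); at $\{X_i, X_{i+1}\}$ the same reaction remains the unique applicable one, since $X_{i+1} \rxni{X_i} X_{i+2}$ is blocked by inhibitor $X_i > 0$, and no other reaction has a present reactant. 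Any feasible flux on this single applicable reaction preserves both invariants.

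Given the invariant, for any reachable $\vc$ I can construct $\rho_2$ by running the unique applicable reaction to completion at each subsequent state, advancing the support $\{X_i\} \to \{X_i, X_{i+1}\} \to \{X_{i+1}\} \to \cdots \to \{X_0\}$ and hence reaching $\vi$ in at most one full cycle. Within each cycle, $X_j$ rises monotonically from $0$ to $1$ during the transition $X_{j-1} \to X_j$ and falls monotonically back to $0$ during $X_j \to X_{j+1}$, fitting the wave definition. For odd $j$, this wave touches only species indexed in $\{j-1, j, j+1\}$, exactly one of which is odd, so every other odd-indexed species is identically $0$ throughout. Consecutive odd-indexed waves (of $X_j$ and $X_{j+2}$) are disjoint and occur in consecutive time intervals, so $\rho_1 : \rho_2$ contains one or more periods of oscillation on $\mathcal{O}$.

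The main subtlety I expect is reconciling the permissive segment-reachability definition, which allows an arbitrary flux vector per segment, with the monotone rise-then-fall required by a wave. The ``only one applicable reaction'' invariant resolves this: any trajectory is forced to be a re-parametrization of the canonical cycle, so monotonicity within each transition is automatic. A secondary wrinkle is that when $n$ is odd, the final transition $X_{n-1} \to X_0$ is between two even-indexed species and contributes no wave on $\mathcal{O}$, creating an apparent ``dead'' interval at the end of each cycle; I would handle this by choosing $\rho_2$ to align the endpoint of $\rho$ with the boundary of a period (running additional cycles if needed), so that the relevant initial segment $\rho([0,t])$ is exactly an integer number of clean periods on $\mathcal{O}$.
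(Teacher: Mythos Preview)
Your proposal is correct and follows essentially the same approach as the paper: both arguments hinge on the observation that the inhibitor structure forces exactly one reaction to be applicable at any reachable configuration (support $\{X_i\}$ or $\{X_i, X_{i+1}\}$), making the dynamics a deterministic cycle through $\{1X_0\} \to \{1X_1\} \to \cdots \to \{1X_{n-1}\} \to \{1X_0\}$, after which any partial trajectory $\rho_1$ can be completed by $\rho_2$ back to $\vi$. Your treatment is in fact slightly more careful than the paper's---you state the support/mass invariant explicitly, and you flag the odd-$n$ ``dead interval'' $X_{n-1} \to X_0$ (which the paper glosses over); note that this interval can simply be absorbed into the tail of the preceding wave of $X_{n-2}$, since $X_{n-2}$ remains at $0$ (hence nonincreasing) and all other odd-indexed species stay at $0$ throughout.
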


\begin{proof}
For each $0 \leq i < n$, let $\alpha_i$ be the reaction $X_i \rxni{X_{i-1}} X_{i+1}$. First, observe that for any configuration $\vc$ in which the species $X_i$ and $X_{i+1}$ are present, the only applicable reaction is $\alpha_i$,
since the reaction $X_{i+1} \rxni{X_{i}} X_{i+2}$ is inhibited by $X_i$,
and all other reactions have a reactant absent. 
Thus every sufficiently long path from $\vc$ just executes $\alpha_i$ until $X_i$ is absent.
Once $X_i$ is absent, $\alpha_{i+1}$ becomes applicable.
At this point, we have only $X_{i+1}$ present, so by similar reasoning, only $\alpha_{i+1}$ is applicable and every sufficiently long path runs only $\alpha_{i+1}$ until $X_{i+1}$ is absent.

Iterating this reasoning over all $i$,
for each $0 \leq i < n$, let $\vu_i$ denote the flux vector with $\vu_i(\alpha_i) = 1$ and $\vu_i(\alpha_j) = 0$ for $j \neq i$ (i.e., execute only reaction $\alpha_i$, for one unit of flux).
Then starting from initial configuration $\vi = \{1 X_0\}$, we see that every path starting from $\vi$ is of the form 
\begin{align*} \hspace*{-1.5cm} 
  &\{1 X_0\}
  \slto_{\vu_0}
  \{1 X_1\}
  \slto_{\vu_1}
  \{1 X_2\}
  \slto_{\vu_2}
  \ldots
  \\
  &\{1 X_{n-1}\}
  \slto_{\vu_{n-1}}
  \{1 X_0\}
  \slto_{\vu_0}
  \{1 X_1\}
  \slto_{\vu_1}
  \ldots
  \{ a X_i, (1-a) X_{i+1}\},
\end{align*}
for some $0 \leq a \leq 1,$
or, assuming the path does not get to configuration $\{1 X_0\}$ above,
$
  \{1 X_0\}
  \slto_{\vu_0}
  \{1 X_1\}
  \slto_{\vu_1}
  \{1 X_2\}
  \slto_{\vu_2}
  \ldots
  \{ a X_i, (1-a) X_{i+1} \}.
$

In either case, by continuing to apply $\alpha_i$ with flux $a$, then unit fluxes of $\alpha_{i+1}, \alpha_{i+2}$, etc. until we reach configuration $\{1 X_0\}$,
this does some positive integer number of periods of oscillation.
Let $\vi = \{1 X_0\}$, $\vc = \{ a X_i, (1-a) X_{i+1} \}$,
this satisfies the definition of oscillation for the species in $\mathcal{O}$.
\qed\end{proof}

\subsection{Construction of iCRN simulating a register machine}
\label{sec:reg-machine-simulation-construction}
In this section we describe how to construct an iCRN $\mathcal{C}$ to simulate an arbitrary register machine $\mathcal{R}$.

Let the set of  states (or lines) of $\mathcal{R}$ be $Q = \{1,2,\dots,m\},$
supposing it starts in state $1$ with initial input register value $n \in \N$.
Suppose $\mathcal{R}$'s input register is \lstinline{r_in} and its output register is \lstinline{r_out}.
To simulate $\mathcal{R}$, $\mathcal{C}$ has input species $R_\text{in}$ and output species $R_\text{out}$, and starts with configuration $\{1 A_1, n R_\text{in}\}$
(i.e., with initial context $\vs = \{1 A_1\}$).

Consider these reactions, which implement the stable oscillator of Lemma~\ref{lem:oscillator} with 3 species (where $X_0=A, X_1=B, X_2=C$).
\begin{align*}
    A &\rxni{C} B
    \\
    B &\rxni{A} C
    \\
    C &\rxni{B} A
\end{align*}
Although we do not use those exact reactions,
it is helpful to see that iCRN as an introduction to how we implement the oscillator component of $\mathcal{C}$.
$\mathcal{C}$ has $m$ variants of each of those species $\{A_1,B_1,C_1,\dots,A_m,B_m,C_m\}$, each subscript representing a state of $\mathcal{R}.$
We will additionally have species $R_1,R_2,\dots$ to represent the various registers of $\mathcal{R}$ as well as designated input and output species $R_\text{in},R_\text{out}$.
For ease of exposition, we use the convention that $\mathcal{R}$ has exactly one input and output register, but this is easily extendable.

Intuitively, the variants of the last reaction $C \rxni{B} A$ will perform all the logic of the register machine: incrementing, decrementing, and changing states.
The other two (variants of) reactions $A \rxni{C} B$ and $B \rxni{A} C$ are simply to make the oscillator work while remembering the current state.
However, since the stateful oscillator will change states in the last reaction, and the last reaction's reactant is an inhibitor for the first reaction, we need to be careful in selecting the correct inhibitors for the first reaction to acknowledge the states are different, and that \emph{multiple} stateful variants of $C$ could be inhibitors of a single variant of $A$.

Formally, for all $1 \leq i \leq m$, 
$\mathcal{C}$ has the reaction 
\[B_i \rxni{A_i} C_i.\]
For all $1 \leq i \leq m$, let $\{{j_1}, {j_2}, \ldots, {j_l}\}$ be the set of states that are potential predecessors of state $i$.
This includes $j = i-1$ if $i > 1$ and state $j$ is not a \lstinline{goto}, as well as all $j$ such that a decrement test for 0 can cause a jump from $j$ to $i$.
For all $1 \leq i \leq m$, 
$\mathcal{C}$ also has the reaction
\[A_i \rxni{C_{j_1},C_{j_2},\dots,C_{j_l}} B_i.\]
Finally, for all $1 \leq i \leq m$,
$\mathcal{C}$ has the following reactions to simulate register machine instructions.
\begin{itemize}
\item
    \label{rxn:inc}
    if state $i$ is \lstinline{inc r_j} (increment register $j$ and move from state $i$ to $i+1$):
    \begin{align*}
        C_i &\rxni{B_i} A_{i+1} + R_j 
    \end{align*}
\noindent Note the dual role of $C_i$:
it helps the ``clock'' to oscillate, but its maximum concentration also defines one ``unit'' of concentration to help us use real-valued concentrations to represent discrete integer counts in registers of $\mathcal{R}$.
In other words, the initial amount of $A_1$ (which sets the maximum concentration achieved by any $C_i$) is also the amount by which $[R_j]$ increases (and the amount it decreases in a decrement instruction).

\item
    if state $i$ is \lstinline{dec r_j,k}
    (decrement register $j$ and move from state $i$ to $i+1$, unless it is 0, in which case go to state $k$):
    \label{rxn:dec}
    \label{rxn:goto}
    \begin{align*}
        C_i + R_j &\rxni{B_i} A_{i+1}
        \\
        C_i &\rxni{B_i,R_j} A_{k}
    \end{align*}
\end{itemize}

\noindent As in the case for the discrete iCRN described in \Cref{sec:intro},
no reactions are associated to $C_i$ if state $i$ is a \lstinline{halt} instruction. 




\subsection{Proof of correctness}
In this section, we prove that the iCRN $\mathcal{C}$ described in \Cref{sec:reg-machine-simulation-construction} correctly simulates the register machine $\mathcal{R}$.

In the definition of $\segto$, it is technically allowed for two consecutive segments to ``point the same direction'', i.e., $\vx \to_{\vu_1}^1 \vy \to_{\vu_2}^1 \vz$ such that $\vu_1$ and $\vu_2$ are multiples of each other.
The next observation says that we can assume without loss of generality this does not happen, since any two such consecutive segments $\vu_1$ and $\vu_2$ can always be concatenated into a single segment $\vu_1+\vu_2$.

\begin{obs}
\label{obs:parallel-segments-concat}
    In any iCRN, if $\vx \segto \vy$, we may assume without loss of generality that each pair of consecutive segments are not multiples of each other.
    In particular, if exactly one reaction is applicable at any time, then any two consecutive segments use different reactions.
\end{obs}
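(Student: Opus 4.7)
The plan is to show that whenever two consecutive straight-line segments have flux vectors $\vu_i$ and $\vu_{i+1}$ with $\vu_{i+1} = \lambda \vu_i$ for some scalar $\lambda \ge 0$ (the sense in which they are ``multiples''), the two segments may be merged into a single segment of flux $\vu_i + \vu_{i+1}$, strictly shortening the sequence. Iterating this operation terminates and produces a path in which no adjacent pair is related this way.

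Concretely, suppose $\vx \segto \vy$ is witnessed by $\vx = \vb_0 \slto_{\vu_1} \vb_1 \slto_{\vu_2} \cdots \slto_{\vu_k} \vb_k = \vy$, and suppose for some index $i$ we have $\vu_{i+1} = \lambda \vu_i$ with $\lambda \ge 0$ (the case $\vu_i = 0$ is handled symmetrically; in either case a zero-flux segment is just deleted after noting $\vb_{i-1} = \vb_i$). First I would verify the stoichiometry: by linearity,
\[
    \vb_{i-1} + \vM(\vu_i + \vu_{i+1}) = (\vb_{i-1} + \vM \vu_i) + \vM \vu_{i+1} = \vb_i + \vM \vu_{i+1} = \vb_{i+1}.
\]
Second I would verify the applicability condition from \Cref{defn-reachable-line}: any reaction $\alpha$ with $(\vu_i + \vu_{i+1})(\alpha) > 0$ must satisfy $\vu_i(\alpha) > 0$, because $\vu_{i+1} = \lambda \vu_i$ has the same support as $\vu_i$ (using $\lambda \ge 0$ and nonnegativity of the flux vectors); but every reaction in the support of $\vu_i$ is applicable at $\vb_{i-1}$ by hypothesis. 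Hence $\vb_{i-1} \slto_{\vu_i+\vu_{i+1}} \vb_{i+1}$ is a valid single segment, and the new sequence
\[
    \vb_0 \slto_{\vu_1} \cdots \slto_{\vu_{i-1}} \vb_{i-1} \slto_{\vu_i+\vu_{i+1}} \vb_{i+1} \slto_{\vu_{i+2}} \cdots \slto_{\vu_k} \vb_k
\]
still witnesses $\vx \segto \vy$ and has one fewer segment. Since $k$ strictly decreases, iterating exhausts all offending adjacent pairs in finitely many steps, yielding the desired witness.

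For the ``in particular'' clause, suppose that at every intermediate configuration $\vb_j$ exactly one reaction is applicable. Then each $\vu_{j}$ is supported on a single reaction, namely the unique one applicable at $\vb_{j-1}$. If two consecutive segments used the same reaction $\alpha$, their flux vectors would both be nonnegative scalars times the indicator of $\alpha$, hence multiples of each other; by the first part they may be merged. After applying the merging procedure exhaustively, no two consecutive segments can share a reaction, proving the claim.

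No step is really the ``hard part'': the proof is essentially a bookkeeping argument. The only subtlety worth flagging explicitly is that the applicability check in \Cref{defn-reachable-line} is phrased in terms of the support of the flux vector at the segment's starting configuration, so what has to be verified when merging is precisely that the combined vector introduces no new reactions into the support---which follows immediately from $\vu_{i+1}$ being a nonnegative multiple of $\vu_i$.
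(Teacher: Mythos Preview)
Your proof is correct and follows exactly the approach the paper indicates: the paper's entire justification is the one-line remark preceding the observation that ``any two such consecutive segments $\vu_1$ and $\vu_2$ can always be concatenated into a single segment $\vu_1+\vu_2$,'' and your write-up is simply a careful verification of that claim (including the applicability check from \Cref{defn-reachable-line}, which the paper leaves implicit). One tiny quibble: when $\lambda=0$ the supports of $\vu_{i+1}$ and $\vu_i$ are not literally the same, but all you need is that the support of $\vu_i+\vu_{i+1}=(1+\lambda)\vu_i$ equals that of $\vu_i$, which still holds.
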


\noindent We also note that there is a distinction between the function of species that ``oscillate'' (i.e. species $A_1,B_1,C_1,\ldots,A_n,B_n,C_n$) and species that represent the value stored in a register ($R_1,R_2\ldots$). We call the former \emph{oscillator species} and the latter \emph{register species}. 
Since the control flow of our construction is driven primarily by the so-called oscillator species, it suffices to focus on their behavior when discussing the properties of the iCRN induced by our construction. 

We develop machinery to talk about specific configurations of $\mathcal{C}$ that contain oscillator species at concentration 1. 
\begin{defn}
\label{defn:transition-point}
Let $A \in \Lambda$ be an oscillator species. We say configuration $\vx \in \Rp^\Lambda$ is a $\emph{transition point}$ of $A$ if $\vx(A) = 1$ and $\vx(B) = 0$ for all other oscillator species $B$.
\end{defn}

\noindent Intuitively, a transition point marks the peak of a species' oscillation, representing a configuration where a previously present oscillator species depletes, allowing a new reaction to become applicable. 
Definition~\ref{defn:transition-point} implicitly characterizes the configurations in $\mathcal{C}$: a configuration is either a transition point or lies ``between'' two transition points.
Furthermore, if a configuration is not a transition point, then the applicable reaction is exactly that applicable in the last reached transition point.

For example, the 3-species oscillator described at the start of \Cref{sec:reg-machine-simulation-construction} has $A, B$, and $C$ as oscillator species.
Consider the transition point $\{1A\}$. In this configuration, the only applicable reaction is the reaction $\alpha = A \rxni{C} B$,
since $A$ is the only species present.
Running $\alpha$ with flux $\frac{1}{2}$, we reach the configuration $\{\frac{1}{2} A, \frac{1}{2} B\}$.
Notice that even though we have some amount of $B$ present in this reaction, $\alpha$ is \emph{still} the only applicable reaction,
since $A$ inhibits the reaction $\beta = B \rxni{A} C$. 
$\beta$ only becomes applicable once we reach the configuration $\{1B\}$, but this is a transition point. This behavior can be generalized as follows:

\begin{obs}
    \label{obs:transition-pts-rxn}
    Let $\vx$, $\vy$ be configurations of the iCRN $\mathcal{C}$ described in \Cref{sec:reg-machine-simulation-construction}.
    If $\vx$ is a transition point of $A$, $\vy$ is not a transition point, 
    and
    $\vx \to^1 \vy$, then the reactions applicable in $\vy$ are exactly the reactions applicable in $\vx$.
\end{obs}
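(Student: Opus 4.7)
My plan is to do a case analysis on the oscillator species $A$ at the transition point $\vx$, showing in each case that exactly one reaction is applicable at $\vx$, and that after firing it with some flux $f \in (0,1)$, the same reaction is still applicable at $\vy$ and no new reaction becomes applicable.

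Since $\vx$ is a transition point, only $A$ is present among oscillator species, and every reaction in $\mathcal{C}$ has an oscillator species as a reactant. Inspection of the construction then shows that exactly one reaction is applicable at $\vx$ in each case: $A_i \rxni{C_{j_1},\dots,C_{j_l}} B_i$ if $A=A_i$; $B_i \rxni{A_i} C_i$ if $A=B_i$; $C_i \rxni{B_i} A_{i+1} + R_j$ if $A=C_i$ and state $i$ is \lstinline{inc r_j}; and exactly one of the two dec reactions if $A=C_i$ and state $i$ is \lstinline{dec r_j,k}, depending on whether $R_j$ is present at $\vx$ (registers have integer-valued concentrations at transition points, by induction on the preceding trajectory). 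If state $i$ is \lstinline{halt}, no reaction is applicable, forcing $\vy = \vx$ and contradicting that $\vy$ is not a transition point. So exactly one reaction fires in the segment $\vx \slto \vy$ with flux $f \geq 0$; the cases $f = 0$ and $f = 1$ respectively give $\vy = \vx$ or a transition point $\vy$, both contradicting the hypotheses, and $f > 1$ would drive $A$ negative. Thus $f \in (0,1)$, and at $\vy$ the reactant $A$ remains at $1-f > 0$, the product oscillator species sits at $f \in (0,1)$, all other oscillator species remain at $0$, and any register $R_j$ involved changes by less than $1$, so its present/absent status is preserved (since $R_j$ at $\vx$ is either $0$ or at least $1$).

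From these facts, the original reaction is clearly still applicable at $\vy$. The key remaining step is verifying that no \emph{other} reaction is applicable: the only species newly present at $\vy$ is the product oscillator species, but the construction is arranged so that the ``downstream'' reaction consuming this product is inhibited by the reactant still present at $\vy$. Concretely, $A_i$ inhibits $B_i \to C_i$, $B_i$ inhibits every $C_i$-reaction, and whenever a $C_i$-reaction produces $A_{i+1}$ or $A_k$, the corresponding reaction $A_{i+1} \rxni{\ldots} B_{i+1}$ or $A_k \rxni{\ldots} B_k$ lists $C_i$ as an inhibitor because state $i$ is a predecessor of state $i+1$ (resp.\ $k$). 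I expect the main obstacle to be precisely this inhibitor bookkeeping: one must check, for every next state reachable from state $i$ in one register-machine step (i.e., $i+1$ via inc or dec-positive, and $k$ via dec-zero), that $C_i$ appears in the inhibitor set of the downstream $A$-to-$B$ reaction, directly from the construction's definition of ``predecessor.'' Once that invariant is stated, the case analysis itself is mechanical.
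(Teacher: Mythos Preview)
Your proposal is correct and matches the reasoning the paper itself uses. The paper states this observation without a standalone proof; the illustrative 3-species example immediately preceding it, together with the case analysis inside the main theorem's proof (establishing invariants (a) and (b)), is exactly the argument you give: at a transition point only one oscillator species is present, hence exactly one reaction is applicable, and the product oscillator species cannot trigger its downstream reaction because the still-present reactant inhibits it by construction.

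One point worth flagging explicitly, which you already noticed: the argument genuinely needs the integer-valued register invariant at transition points. Without it the observation is false as stated---if $\vx$ is a transition point of $C_i$ for a \texttt{dec} state with $[R_j]=\tfrac12$, then firing $C_i+R_j\to A_{i+1}$ with flux $\tfrac12$ lands at a non-transition-point $\vy$ where $R_j=0$, so $\beta_1$ is no longer applicable but $\beta_2$ now is. You handle this by invoking the integer invariant ``by induction on the preceding trajectory,'' which is fine, but be aware this makes the observation and invariant~(a) of the theorem logically intertwined: formally they are proved by a joint induction (which is in effect what the theorem's proof does), rather than the observation standing on its own for arbitrary transition points.
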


\noindent This observation indicates that the applicable reactions of $\mathcal{C}$ changes only upon reaching a new transition point. Therefore, instead of reasoning about arbitrary configurations in concentration space, we can just consider the reachability of transition points. Additionally, observation \ref{obs:parallel-segments-concat} implies that we can assume transition points are reached in a single flux 1 line segment, enabling discrete arguments about the behavior of our construction.





\begin{thm}
    Suppose that $\mathcal{R}$ computes a function $f:\N \to \N$ in the sense that, starting with input register having value $n$,
    it halts with output register having value $f(n)$.
    Then the iCRN $\mathcal{C}$ described above stably computes $f$ from the initial configuration $\vi = \{1A_1,nR_{\emph{in}}\}$.
\end{thm}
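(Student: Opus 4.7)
The plan is to prove the theorem by induction on the number of steps $t$ that $\mathcal{R}$ takes on input $n$, maintaining the simulation invariant: after $\mathcal{R}$ has taken $t$ steps and is in state $i$ with register values $\vr_t$, the iCRN $\mathcal{C}$ segment-reaches the transition point $\{1 A_i\} + \vr_t$, where $\vr_t$ denotes the nonnegative vector over register species whose coordinates match $\mathcal{R}$'s register values after $t$ steps. The base case $t = 0$ holds because the initial configuration $\vi = \{1 A_1\} + \{n R_\text{in}\}$ encodes state $1$ with $\vr_0(R_\text{in}) = n$ and all other registers $0$.

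The inductive step reduces to a key single-step simulation lemma: starting from the transition point $\{1 A_i\} + \vr$ where $i$ is not a halt state, every sufficiently long trajectory of $\mathcal{C}$ passes through $\{1 B_i\} + \vr$ (via $A_i \rxni{\cdots} B_i$, whose inhibitor list is vacuously satisfied since all $C_j$'s are absent by the invariant), then $\{1 C_i\} + \vr$ (via $B_i \rxni{A_i} C_i$). At $\{1 C_i\} + \vr$ exactly one of $C_i$'s instruction reactions is applicable, and it runs with flux exactly $1$: if state $i$ increments register $j$, we reach $\{1 A_{i+1}\} + \vr + \{1 R_j\}$; if state $i$ decrements register $j$ with $\vr(R_j) \geq 1$, only $C_i + R_j \rxni{B_i} A_{i+1}$ applies (the jump reaction being inhibited by $R_j$), and we reach $\{1 A_{i+1}\} + \vr - \{1 R_j\}$; if $\vr(R_j) = 0$, only $C_i \rxni{B_i, R_j} A_k$ applies, and we reach $\{1 A_k\} + \vr$. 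In each case the resulting configuration is precisely the transition point corresponding to $\mathcal{R}$'s next step, closing the induction.

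To verify the definition of stable computation, given any $\vc$ with $\vi \segto \vc$ I would use Observations~\ref{obs:parallel-segments-concat} and~\ref{obs:transition-pts-rxn} to argue that $\vc$ either is already a transition point or lies interior to a segment whose completion is a transition point; in the latter case the remaining flux can always be concatenated and executed. Thus we may assume $\vc$ is a transition point of the form $\{1 A_i\} + \vr$, $\{1 B_i\} + \vr$, or $\{1 C_i\} + \vr$ for some simulated state $i$, and the single-step lemma advances to the next transition point. Since $f$ is total, $\mathcal{R}$ eventually halts after some $T$ steps in state $h$ with $\vr_T(R_\text{out}) = f(n)$, and the simulation reaches $\{1 C_h\} + \vr_T$; because by construction no reaction lists $C_h$ as a reactant when $h$ is a halt state, this configuration is static and hence stable, with $R_\text{out}$-concentration exactly $f(n)$, as required.

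The main obstacle is the bookkeeping for the inhibitor lists in the $A_i \rxni{C_{j_1}, \ldots, C_{j_l}} B_i$ reactions. I must verify that the inductive invariant truly guarantees all predecessor $C_j$'s are at concentration $0$ at the transition point $\{1 A_i\} + \vr$, and moreover that during the preceding segment in which $A_i$ is being produced from some $C_{j^\star}$, the reaction $A_i \to B_i$ cannot fire prematurely because $C_{j^\star}$ is still present as an inhibitor until fully consumed. A secondary subtlety is that the adversary might execute $C_i + R_j \to A_{i+1}$ only partially, leaving fractional concentrations; but since $C_i$ starts at $1$ and $\vr(R_j)$ is a positive integer in the nonzero decrement case, the same reaction remains the unique applicable reaction until $C_i$ is exhausted, so the total flux can always be brought up to exactly $1$, restoring integrality of the register-species concentrations at the next transition point.
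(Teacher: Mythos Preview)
Your proposal is correct and follows essentially the same approach as the paper: both arguments induct over transition points, maintaining the invariants that register-species concentrations stay integral and that exactly one reaction is applicable at each transition point, and then match the resulting sequence of transition points to $\mathcal{R}$'s execution trace. The only cosmetic differences are that you bundle the three segments $A_i \to B_i \to C_i \to A_{i'}$ into a single inductive step (whereas the paper inducts segment by segment) and that you place the static halting configuration at $\{1C_h\}+\vr_T$ rather than at $\{1A_h\}+\vr_T$---a harmless discrepancy arising from an ambiguity in the paper's own description of which oscillator reactions exist for halt states.
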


\begin{proof}
    A complete example of this construction is given in \Cref{sec:example}.

    Let $R_{\text{in}}$ be the input species and $R_{\text{out}}$ be the output species. 
    For $\mathcal{C}$ to stably compute $f$,
    we need that for any valid initial configuration $\vi = \{1 A_1, n R_\text{in}\}$,
    and any configuration $\vc$ such that $\vi \segto \vc$, there exists a configuration $\vo$ such that $\vc \segto \vo$, $\vo(R_{\text{out}}) = f(n)$ and for all $\vo'$ such that $\vo \segto \vo'$, $\vo'(R_{\text{out}}) = \vo(R_{\text{out}})$. 
    
    It suffices to show that for any integer initial concentration of $R_{\text{in}}$, there exists exactly one trajectory, ending in a static configuration $\vh$ such that $\vh(R_{\text{out}}) = f(n)$.

    We first prove that the following invariants hold at every reachable transition point $\vx$.
    \begin{enumerate}[(a)]
        \item 
        \label{inv-int}
        For every register species $R_j$, $\vx(R_j) \in \N$.
        
        \item
        \label{inv-one-app}
        Exactly one reaction is applicable in $\vx$ (unless $\vx(A_i) = 1$ for a halting state $i$, in which case no reactions are applicable).
    \end{enumerate}
    
    \noindent We proceed by induction on the number of flux one-line segments connecting transition points $\vi$ and $\vc$. 
    (By Observation~\ref{obs:parallel-segments-concat} we may assume each segment is not a multiple of the previous.)
    
    For the base case,
    we show these invariants hold at $\vi$.
    Invariant \ref{inv-one-app} is established for all transition points below, including $\vi$.
    By construction, the only register species present in $\vi$ is $R_\mathrm{in}$, with concentration $n \in \N$, so invariant \ref{inv-int} is satisfied.
    
    Now, we show the inductive case that if the invariants hold at a transition point $\vx$,
    then we can execute the one applicable reaction
    (guaranteed to exist by invariant \ref{inv-one-app} unless we have halted) with flux 1,
    and that this will reach the next transition point $\vy$, such that the invariants still hold.
    
    First, we claim that at any transition point, $\vx$ with oscillator species $O_i$ having $\vx(O_i)=1$, at most one reaction is possible, exactly 1 if $i$ is a non-halting state, and 0 if $i$ is a halting state and $O_i = A_i$.
    If $O_i$ is $B_i$ or $C_i$, this is evident by the fact that each of those is a reactant in exactly one reaction in the network,
    and at transition points all other oscillator species are absent.
    In the case $O_i = A_i$,
    this is again evident if $i$ represents an increment instruction, since the 
    $C_i \rxni{B_i} A_{i+1} + R_j$
    reaction is the only one with $C_i$ as a reactant.
    If $i$ is a decrement, 
    then $C_i$ is a reactant in two reactions
    $C_i + R_j \rxni{B_i} A_{i+1}$
    and
    $C_i \rxni{B_i,R_j} A_k$,
    but one has $R_j$ as a reactant, and the other has $R_j$ as an inhibitor,
    so exactly one of those two reactions is applicable.
    This establishes that invariant \ref{inv-one-app} holds at the next transition point reached,
    when the applicable reaction is executed for one unit of flux.
    By Observation~\ref{obs:parallel-segments-concat} we assume a single segment applies this reaction until it is inapplicable, reaching the next transition point.

    It remains to argue that invariant \ref{inv-int} also holds at the next transition point.
    Let $O_i \in \{A_1,B_1,C_1,\ldots,A_m,B_m,C_m\}$ be an oscillator species and let $\vx = \{1O_i, m_1R_1,m_2R_2\ldots m_nR_n\}$, be a transition point that is reached from $\vi$.
    Assume the induction hypothesis that invariants \ref{inv-int} and \ref{inv-one-app} hold at $\vx$.
    Then each $m_i \in \N$ by invariant \ref{inv-int}.
    By \ref{inv-one-app} $\vx$ has exactly one applicable reaction.
    If $\vx$ is a transition point of $C_i$, and state $i$ of the register machine $\mathcal{R}$ is \lstinline{inc r_j}, then the applicable reaction in $\vx$ is 
    $
       \alpha = C_i \rxni{B_i} A_{i+1} + R_j.
    $
    By construction, there is no other reaction with $C_i$ as a reactant, and the reaction $A_{i+1} \rxni{C_{j_1},\dots,C_i, \ldots C_{j_l}} B_{i+1}$ (where each $C_{j_x}$ is a potential predecessor state of $i+1$) is inhibited by $C_i$, so every sufficiently long path from $\vx$ just executes $\alpha$ until we reach the transition point 
    $
    \vy = \{1A_{i+1}, m_1R_1,\ldots,(m_j+1)R_j, \ldots, m_nR_n \}.
    $
    So \ref{inv-int} holds. 
    
    If line $i$ of $\mathcal{R}$ is instead \lstinline{dec r_j,k} then there are reactions 
    \begin{align*}
        \beta_1 = & C_i + R_j \rxni{B_i} A_{i+1}
        \\
        \beta_2 = & C_i \rxni{B_i,R_j} A_{k}
    \end{align*}
    When $R_j$ is present in $\vx$, the only applicable reaction is $\beta_1$. 
    By a similar argument to the previous case, the reaction $A_{i+1} \rxni{C_i} B_{i+1}$ is inhibited, so every sufficiently long path from $\vi$ executes $\beta_1$ until we reach $\{1B_{i+1}$, $m_1R_1,$ $\ldots,$ $(m_j-1)R_j, \ldots, m_nR_n\}$.
    If $R_j$ is not present then the only applicable reaction is now $\beta_2$.
    Then every sufficiently long path from $\vi$ reaches the transition point $\{1B_k,m_1R_1,\ldots m_nR_n\}$.
    In either case, invariant \ref{inv-int} holds.
    This establishes the claim that invariants \ref{inv-int} and \ref{inv-one-app} hold at each reachable transition point.

    We now show that the sequence of states for oscillator species aligns with the execution order of lines in $\mathcal{R}$ and results in a correct simulation of $\mathcal{R}$.
    By construction of $\mathcal{C}$, for each line $i$ of $\mathcal{R}$ of the form \lstinline{inc r_j}, $\mathcal{C}$ has corresponding reaction $C_i \rxni{B_i} A_{i+1} + R_j$.
    By invariant \ref{inv-one-app}, this reaction will be applicable when transition point $\vc$ has a $C_i$ species present, so the next transition point will contain species $A_{i+1}$.
    Thus when $\mathcal{R}$ goes from line $i$ to $i+1$, the present oscillator species in $\mathcal{C}$ simulates this transition in the sense that the subscript $i$ is updated to $i+1$,
    and the concentration of $R_j$ increases by 1.
    Similarly, for each line $i$ of $\mathcal{R}$ of the form \lstinline{dec r_j,k}, 
    there are reactions 
    $C_i + R_j \rxni{B_i} A_{i+1}$
    and
    $C_i \rxni{B_i,R_j} A_{k}$.
    When $R_j$ is present, species $A_{i+1}$ is 1, and $R_j$ decreases by 1 at the next transition point, and when $R_j$ is not present, $A_k$ is 1.
    Thus decrement reactions are also properly simulated by $\mathcal{C}.$
    
    Since $\mathcal{R}$ halts with its output register having value $f(n)$, and $\mathcal{C}$ simulates $\mathcal{R}$, by \ref{inv-one-app} any sufficiently long sequence of reactions will eventually reach some static configuration $\vh$ representing $\mathcal{R}$'s halting configuration. 
    Furthermore, by \ref{inv-int} the values of the register species at the halting point are equal to the values of the registers in $\mathcal{R}$ when it halts. 
    Thus the configuration contains the correct concentration of $R_{\text{out}}$.
    Since this is a static (thus stable) configuration,
    this shows that $\mathcal{C}$ stably computes $f$.
\qed\end{proof}


\subsection{Example of construction of iCRN from register machine}
\label{sec:example}
We demonstrate an example of our construction by translating a register machine $\mathcal{R}$ that computes the function $f(n) = 2n$ to an iCRN $\mathcal{C}$. 
The machine $\mathcal{R}$ that computes $f$ requires only input and output registers $r_{\text{in}},r_{\text{out}}$. 

{\centering
    \begin{tabular}{|l|l|}
        \hline
        {Instructions} & {Reactions} 
        \\ \hline
        1: 
        \verb!dec r_in,5!
        &  
        $\begin{array}{l}
        A_1 \rxni{C_4} B_1 \\
        B_1 \rxni{A_1} C_1 \\
        C_1 + R_\text{in} \rxni{B_1} A_{2} \\
        C_1 \rxni{B_1,R_\text{in}} A_5
        \end{array}$
        \\ \hline
        2: \verb!inc r_out!
        &
        $\begin{array}{l}
        A_2 \rxni{C_1} B_2 \\
        B_2 \rxni{A_2} C_2 \\
        C_2 \rxni{B_2} A_{3} + R_\text{out}
        \end{array}$
        \\ \hline
        3: \verb!inc r_out!
        &
        $\begin{array}{l}
        A_3 \rxni{C_2} B_3 \\
        B_3 \rxni{A_3} C_3 \\
        C_3 \rxni{B_3} A_{4} + R_\text{out}
        \end{array}$
        \\ \hline
        4: \verb!goto 1!
        &
        $\begin{array}{l}
        A_4 \rxni{C_3} B_4 \\
        B_4 \rxni{A_4} C_4 \\
        C_4 \rxni{B_4} A_1
        \end{array}$
        \\ \hline
        5: \verb!halt! & no reactions 
        \\ \hline
    \end{tabular}
\par}

\begin{figure}
    \centering
    \includegraphics[width=4.7in]{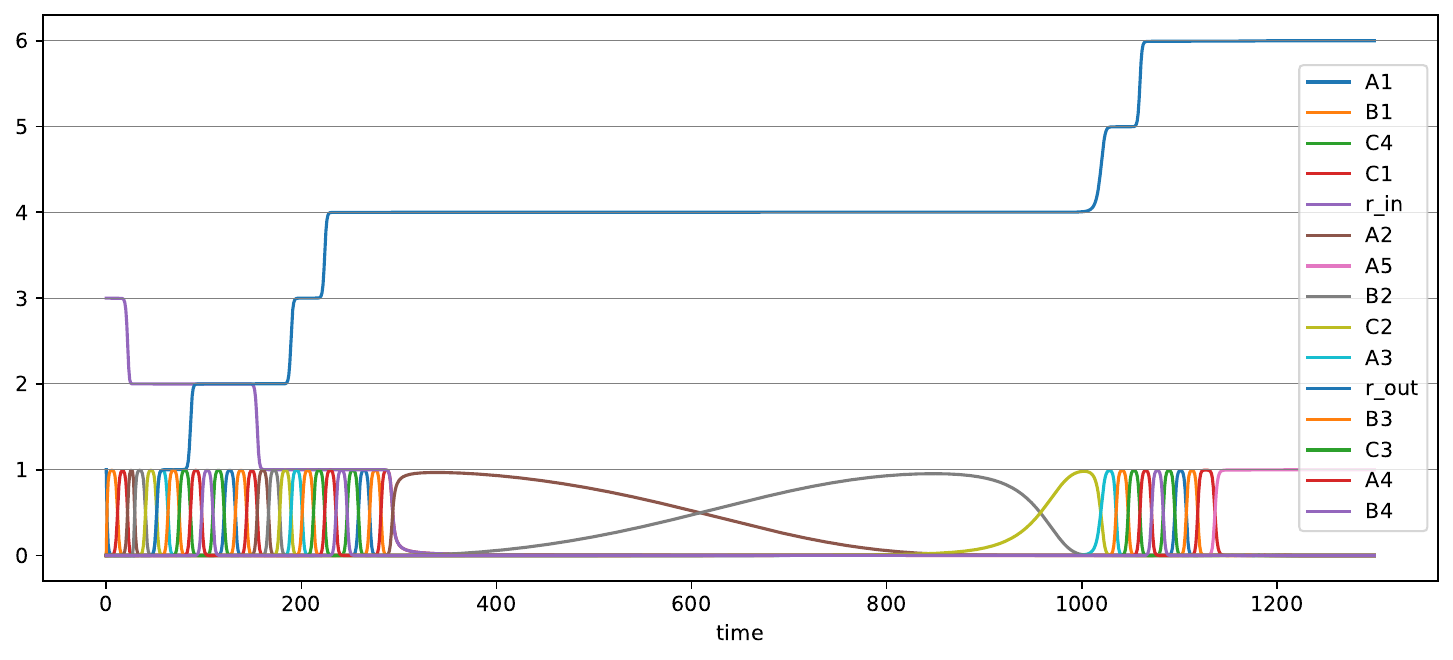}
    \caption{Plot of iCRN simulating ``multiply-by-2'' register machine, with input register \texttt{r\_in} having initial value $3$. 
    Note the species \texttt{r\_in} decrements from 3 down to 0, and the species \texttt{r\_out} increments from 0 up to 6, while other species oscillate.}
    \label{fig:mul2}
\end{figure}

\Cref{fig:mul2} shows a plot of this iCRN's trajectory,
under the mass-action rate model for reactants,
and where each inhibitor $I$ contributes a term $1 / (1 + 10^5 \cdot [I])$ to the rate of the reaction, as an approximation of ``absolute'' inhibition.\footnote{
    The long wave seen in the middle is because the reaction 
    $C_1 + R_\mathrm{in} \rxni{B_1} A_2$,
    when $R_\mathrm{in}$ starts at 1,
    has a much slower rate of convergence (linear, compared to exponential convergence when $R_\mathrm{in}$ starts 2 or higher).
    Consequently, $C_1$ from time $\approx 300$ to time $\approx 800$, despite being ``close'' to 0, is decaying to 0 much more slowly than in previous oscillations.
    Thus $C_1$ much more strongly inhibits the reaction $A_2 \rxni{C_1} B_2$ than in previous oscillations.
    $A_2$ and $B_2$ are the two species ``swapping'' very slowly between time 300 and 900.
}











\section{Conclusion}
\label{sec:conclusion}

There are some interesting questions for future research.

\paragraph{Relaxing absolute inhibition.}
The most glaring shortcoming of the inhibitory CRN model is the notion of ``absolute'' inhibition: 
any positive concentration of an inhibitor completely disables the reaction.
This is clearly unrealistic when taken to extremes:
with an enormous amount of reactant $R$, a tiny amount of $I$ cannot be expected to stop all $R$ from reacting via $R \rxn^I \dots$.
A more realistic model might say that the rate of a reaction is an increasing function of the concentration of its reactants and a decreasing function of the concentration of its inhibitors, for example using a Hill function such as $\frac{[R]}{1 + [I]}$ for the rate of the reaction.
However, any way of doing this seems to talk about rates,
and it is not clear how to meaningfully ask what tasks can be done in a rate-independent way in such a model.
One possible way to study this question meaningfully is similar to an approach suggested in the Conclusions of~\cite{chen2023rate} (for studying rate-independence in mass-action CRNs):
define a mass-action-like rate law in which a reaction's rate is a decreasing function of its inhibitors' concentrations,
and allow the adversary to set constant parameters in the rate law, but not to change the rate law itself.

\paragraph{Characterizing real-valued functions.}
We have demonstrated that the iCRN model is Turing universal in the sense that it can compute any computable function $f: \N \to \N$.
However, the natural data type for continuous iCRNs to process is real numbers.
It remains to characterize what functions $f: \Rp \to \Rp$ (or $f: \Rp^d \to \Rp$) can be stably computed by continuous iCRNs.
Using a \emph{dual-rail encoding} to encode a value $x$ as the difference of two concentrations $[X^+] - [X^-]$, one can also meaningfully investigate computation of functions $f:\R^d\to\R$ with negative inputs and outputs,
similar to the characterization of continuous piecewise linear functions stably computable by continuous (non-inhibitory) CRNs using dual-rail encoding~\cite{chen2023rate}.

\paragraph{Acknowledgements.}
DD and KC were supported by NSF awards 2211793, 1900931, 1844976, and DoE EXPRESS award SC0024467.

\bibliography{tam}


\end{document}